\documentclass[final,journal]{IEEEtran}

\usepackage{url}
\usepackage{cite}
\usepackage[cmex10]{amsmath} 
\usepackage{amssymb,amsthm}
\usepackage{dsfont}
\usepackage{mathtools}
\usepackage{amsmath}
\usepackage{xfrac}
\usepackage{stmaryrd}
\usepackage{tikz}
\usepackage{centernot}
\usepackage[letterspace=150]{microtype}
\usepackage{cryptocode}

\makeatletter
\g@addto@macro{\UrlBreaks}{\UrlOrds}
\makeatother

\makeatletter\let\glb@currsize\@empty\makeatother

\usetikzlibrary{automata, arrows}

\newtheorem{theorem}{Theorem}

\newtheorem{definition}{Definition}
\newtheorem{proposition}[theorem]{Proposition}

\newtheorem{remark}{Remark}

\newtheorem{corollary}[theorem]{Corollary}

\interdisplaylinepenalty=2500

\newcommand{\range}[1]{\llbracket #1 \rrbracket}

\begin{document}
\title{Optimal Pre-Processing to Achieve Fairness and Its Relationship with Total Variation Barycenter} 

\author{Farhad Farokhi
	\thanks{F. Farokhi is with the Department of Electrical and Electronic Engineering at the University of Melbourne. Email: farhad.farokhi@unimelb.edu.au}\thanks{The work of F. Farokhi is funded by a startup grant from the Melbourne School of Engineering at the University of Melbourne.}}

\maketitle

\begin{abstract} We use disparate impact, i.e., the extent that the probability of observing an output depends on protected attributes such as race and gender, to measure fairness. We prove that disparate impact is upper bounded by the total variation distance between the distribution of the inputs given the protected attributes. We then use pre-processing, also known as data repair, to enforce fairness. We show that utility degradation, i.e., the extent that the success of a forecasting model changes by pre-processing the data, is upper bounded by the total variation distance between the distribution of the data before and after pre-processing. Hence, the problem of finding the optimal pre-processing regiment for enforcing fairness can be cast as minimizing total variations distance between the distribution of the data before and after pre-processing subject to a constraint on the total variation distance between the distribution of the inputs given protected attributes. This problem is a linear program that can be efficiently solved. We show that this problem is intimately related to finding the barycenter (i.e., center of mass) of two distributions when distances in the probability space are measured by total variation distance. We also investigate the effect of differential privacy on fairness using the proposed the total variation distances. We demonstrate the results using numerical experimentation with a practice dataset.
\end{abstract}

\section{Introduction}
Meteoric rise of machine learning has promoted development of wide-ranging applications based on data~\cite{jiang2017artificial, kourou2015machine,dua2016data}. Automated decision making is getting used in sensitive areas, such as criminal justice system~\cite{doi1011770049124118782533}, hiring policies~\cite{amazon_hiring}, and lending~\cite{lending_nab}, despite its lasting impact on human life. These models are however most often trained/designed based on data samples that can be biased. For instance, policing and criminal justice data  are often negatively biased against African-Americans and models trained based on this data can demonstrate unjustifiably high risk of recidivism~\cite{propublica}. This motivates investigating fairness. 

There are two main approaches to fair decision making: pre-processing and post-processing. In post-processing, a machine learning model, e.g., classifier, is changed so that its output is not correlated with the protected attribute, e.g., race or gender~\cite{zafar2017fairness, bechavod2017penalizing,donini2018empirical}. In pre-processing, the input data is modified so that the protected attribute cannot be predicted~\cite{feldman2015certifying,johndrow2019algorithm}. This means any model trained on the processed data will be fair. Pre-processing is utilized in this paper.

We start with defining statistical parity as a notion of fairness. Statistical parity requires that the probability of observing any outcome from the model is the same across all protected attributes, i.e., the outcome of the model is statistically independent of the protected attribute. This definition has been utilized to develop disparate impact as a measure of fairness~\cite{feldman2015certifying}. We prove that disparate impact is upper bounded by the total variation distance between the distribution of the inputs given the protected attributes. We propose a measure of utility degradation, i.e., the extent to which the success of any model is changed by pre-processing the data. We prove that the utility degradation is upper bounded by the total variation distance between the distribution of the data before and after pre-processing. Therefore, to achieve statistical parity by pre-processing while minimizing the utility degradation, we must minimize total variations distance between the distribution of the data before and after pre-processing subject to a constraint on the total variation distance between the distribution of the inputs given the protected attributes. This problem is a linear program and can be  solved efficiently even when very large~\cite{POTRA2000281}. We show that the problem of optimal pre-processing for achieving fairness is intimately related to  finding the \textit{barycenter (i.e., the center of mass) of the distributions of the data conditioned on the sensitive attribute} with distances among the distributions being measured using total variation distance. In this paper, this is referred to as the \textit{total variation barycenter}, which is a linear program. To connect this paper with the existing literature on privacy and fairness, we also investigate the effect of differential privacy on disparate impact and utility degradation using the developed upper bounds based on total variation distance.

Total variation barycenter is similar to the Wasserstein barycenter~\cite{agueh2011barycenters} with the exception of using total variation distance over the space of probability distributions. The change of distance to total variation results in significant computational improvement. The Wasserstein barycenter problem gives rise to complex optimization problems that are not as computationally efficient as the linear programs formulated in this paper. It is worth mentioning that, in the past, the Wasserstein distance has been used to bound disparate impact and utility degradation~\cite{gordaliza2019obtaining, jiang2020wasserstein}, which has resulted in application of the Wasserstein barycenter to fairness. The methods in this paper are much more computationally friendly. Also, total variation distance was proposed for enforcing fairness in~\cite{calmon2017optimized}. However, that paper only focuses on computational aspects of the problem and the relationship between total variation distance, utility degradation, and disparate impact was not investigated. This paper fills this gap. Also, that paper did not investigate the impact of differential privacy in fairness using total variation distance. The closest study to this paper is~\cite{dwork2012fairness}, which uses total variation distance for measuring fairness. However, in that paper, the distribution from one sensitive group is mapped to the other group while, in this paper, we map both distributions to their barycenter to find the best compromise; see (16) in~\cite{dwork2012fairness} which maps the distribution of one sensitive group to another. 

\section{Fairness}
Consider the problem of forecasting output $Y\in\mathbb{Y}$ using correlated random variable $X\in\mathbb{X}$. The population is divided into two\footnote{Extension to more subgroups follows the same line of reasoning.}
subgroups or categories that represent social divide. This is modeled by random variable $S\in\mathbb{S}:=\{0,1\}$. In the algorithmic fairness literature, variable $S$ is identified as the protected attribute, e.g., $S=0$  for a minority or underrepresented group and $S=1$ for the default group. Model $\mathcal{M}:\mathbb{X}\rightarrow \mathbb{Y}$ achieves statistical parity if 
\begin{align} \label{eqn:statistical_parity}
\mathbb{P}\{\mathcal{M}(X)=y\,&|\,S=1\}\nonumber\\
&=\mathbb{P}\{\mathcal{M}(X)=y\,|\,S=0\},\quad\forall y\in\mathbb{Y}.
\end{align}
If~\eqref{eqn:statistical_parity} holds,  the probability of observing an outcome is the same across both minority and default groups, i.e., $\mathcal{M}(X)$ and $S$ are statistically independent. This condition is generally not satisfied in empirical datasets and common models~\cite{dwork2012fairness}. This could be because of a plethora of reasons: bias in sampling, inherent societal bias, or development of unfair forecasting models. One way to achieve statistical parity is to change/transform the data $X$ to break the dependence/correlation between the data $X$ and the protected attributed $S$. Doing so, we can ensure that every classifier acting on the data is fair with respect to $S$ in the sense of~\eqref{eqn:statistical_parity}. This is called data repair/pre-processing~\cite{dwork2012fairness,gordaliza2019obtaining, jiang2020wasserstein}. To this aim, we define new random variable $\tilde{X}\in\mathbb{X}$ and find conditional probability $P_{\tilde{X}|X,S}(\tilde{x}|x,s)=\mathbb{E}\{\tilde{X}=\tilde{x}|X=x,S=s\}$ such that
\begin{align} 
\sum_{x\in\mathbb{X}} P_{\tilde{X}|X,S}(&\tilde{x}|x,1)\mathbb{P}\{X=x|S=1\}\nonumber\\
&=\sum_{x\in\mathbb{X}} P_{\tilde{X}|X,S}(\tilde{x}|x,0)\mathbb{P}\{X=x|S=0\}.\label{eqn:total_repair}
\end{align}
This equality implies that $\tilde{X}$ and $S$ are statistically independent:
\begin{align*}
\mathbb{P}\{\tilde{X}=\tilde{x}\,&|\,S=1\}=\mathbb{P}\{\tilde{X}=\tilde{x}\,|\,S=0\},\quad\forall \tilde{x}\in\mathbb{X}.
\end{align*}
 If~\eqref{eqn:total_repair} holds, for any forecasting model $\mathcal{M}:\mathbb{X}\rightarrow \mathbb{Y}$, we get 
\begin{align} 
\mathbb{P}\{\mathcal{M}(\tilde{X})=y\,&|\,S=1\}\nonumber\\
&=\mathbb{P}\{\mathcal{M}(\tilde{X})=y\,|\,S=0\},\quad\forall y\in\mathbb{Y}.
\label{eqn:statistical_parity_data_repair}
\end{align}
Therefore, after data repair, we always achieve statistical parity irrespective of what model $\mathcal{M}$ we use. 

We might be interested in relaxing strict fairness conditions in~\eqref{eqn:statistical_parity} or~\eqref{eqn:statistical_parity_data_repair}  in order to attain fairness approximately. In that case, a useful notion of fairness is 
\begin{align*}
\mathfrak{F}(\mathcal{M}):=\max_{y\in\mathbb{Y}}&\Big|
\mathbb{P}\{\mathcal{M}(\tilde{X})=y\,|\,S=1\}\\
&-\mathbb{P}\{\mathcal{M}(\tilde{X})=y\,|\,S=0\}\Big|.
\end{align*}
This measure is often referred to as disparate impact~\cite{feldman2015certifying}. Note that statistical parity is achieved if and only if $\mathfrak{F}(\mathcal{M})=0$. In the remainder of this paper, we use the notation: 
\begin{align*}
\tilde{Q}_s(\tilde{x})&=\sum_{x\in\mathbb{X}} P_{\tilde{X}|X,S}(\tilde{x}|x,s) \mathbb{P}\{X=x|S=s\}.
\end{align*}
Here, $\tilde{Q}_s(\tilde{x})$ denotes the probability of the event $\tilde{X}=x$ conditioned on the realization of the protected attributed $S=s$.

\begin{theorem} $
\mathfrak{F}(\mathcal{M})\leq 2d_{\mathrm{TV}}(\tilde{Q}_1,\tilde{Q}_0).$
\end{theorem}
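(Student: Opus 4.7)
The plan is to unpack both sides of the inequality in terms of the conditional distributions $\tilde{Q}_s$ and then apply the triangle inequality.

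First, I would rewrite $\mathbb{P}\{\mathcal{M}(\tilde{X})=y\mid S=s\}$ by marginalizing over $\tilde{X}$. Since $\mathcal{M}$ is deterministic, the event $\{\mathcal{M}(\tilde{X})=y\}$ is the event $\{\tilde{X}\in\mathcal{M}^{-1}(y)\}$, so
\begin{equation*}
\mathbb{P}\{\mathcal{M}(\tilde{X})=y\mid S=s\}=\sum_{\tilde{x}\in \mathcal{M}^{-1}(y)} \tilde{Q}_s(\tilde{x}).
\end{equation*}
Subtracting the $s=0$ and $s=1$ versions, the inner quantity in the definition of $\mathfrak{F}(\mathcal{M})$ becomes the absolute value of $\sum_{\tilde{x}\in \mathcal{M}^{-1}(y)}(\tilde{Q}_1(\tilde{x})-\tilde{Q}_0(\tilde{x}))$, which sits in a single measurable event.

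Second, I would apply the triangle inequality to move the absolute value inside the sum and then enlarge the summation from $\mathcal{M}^{-1}(y)$ to the entire alphabet $\mathbb{X}$:
\begin{equation*}
\Bigl|\sum_{\tilde{x}\in \mathcal{M}^{-1}(y)}(\tilde{Q}_1(\tilde{x})-\tilde{Q}_0(\tilde{x}))\Bigr|
\leq \sum_{\tilde{x}\in \mathbb{X}} |\tilde{Q}_1(\tilde{x})-\tilde{Q}_0(\tilde{x})|
= 2\,d_{\mathrm{TV}}(\tilde{Q}_1,\tilde{Q}_0),
\end{equation*}
using the standard identification of the $\ell^1$ distance between two probability mass functions with twice the total variation distance. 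Since the resulting bound no longer depends on $y$, I can take the maximum over $y\in\mathbb{Y}$ on the left-hand side to obtain the desired inequality.

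There is essentially no hard step here; the main thing to watch is the normalization convention for $d_{\mathrm{TV}}$, since the factor of $2$ on the right-hand side of the theorem tells us the author is using the convention $d_{\mathrm{TV}}(P,Q)=\tfrac{1}{2}\sum_{x}|P(x)-Q(x)|$ (equivalently, the supremum of $|P(A)-Q(A)|$ over events $A$). With that convention fixed, the bound is actually a touch loose in principle, because the supremum characterization of $d_{\mathrm{TV}}$ would already give $\mathfrak{F}(\mathcal{M})\leq d_{\mathrm{TV}}(\tilde{Q}_1,\tilde{Q}_0)$ directly; the looser constant of $2$ arises naturally from the triangle-inequality route above and is the form the author has chosen to state.
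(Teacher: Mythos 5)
Your proof is correct and follows essentially the same route as the paper's: the paper writes $\mathbb{P}\{\mathcal{M}(\tilde{X})=y\mid S=s\}$ as $\sum_{\tilde{x}}\mathds{1}_{\mathcal{M}(\tilde{x})=y}\tilde{Q}_s(\tilde{x})$, applies the triangle inequality, and drops the indicator to reach $\sum_{\tilde{x}}|\tilde{Q}_1(\tilde{x})-\tilde{Q}_0(\tilde{x})|=2d_{\mathrm{TV}}(\tilde{Q}_1,\tilde{Q}_0)$, which is exactly your preimage-set argument in indicator notation. Your closing observation is also correct: the supremum-over-events characterization would give the sharper constant $1$, and the factor of $2$ in the stated theorem is an artifact of the triangle-inequality route both you and the author take.
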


\begin{proof}
Note that
\begin{align*}
|
\mathbb{P}\{\mathcal{M}&(\tilde{X})=y\,|\,S=1\}-\mathbb{P}\{\mathcal{M}(\tilde{X})=y\,|\,S=0\}|\\
&\leq |
\mathbb{E}\{\mathds{1}_{\mathcal{M}(\tilde{X})=y}\,|\,S=1\}-\mathbb{P}\{\mathds{1}_{\mathcal{M}(\tilde{X})=y}\,|\,S=0\}|\\
&= \Bigg|\sum_{\tilde{x}\in\mathbb{X}}\mathds{1}_{\mathcal{M}(\tilde{x})=y} \tilde{Q}_1(\tilde{x})-\sum_{\tilde{x}\in\mathbb{X}}\mathds{1}_{\mathcal{M}(\tilde{x})=y}\tilde{Q}_0(\tilde{x})\Bigg|\\
&\leq \sum_{\tilde{x}\in\mathbb{X}} \Big|\mathds{1}_{\mathcal{M}(\tilde{x})=y} \tilde{Q}_1(\tilde{x})-\mathds{1}_{\mathcal{M}(\tilde{x})=y}\tilde{Q}_0(\tilde{x})\Big|\\
&\leq \sum_{\tilde{x}\in\mathbb{X}} |\tilde{Q}_s(\tilde{x})-\tilde{Q}_s(\tilde{x})|
\\
&=2d_{\mathrm{TV}}(\tilde{Q}_1,\tilde{Q}_0).
\end{align*}
This concludes the proof.
\end{proof}

\begin{definition}[Differential Privacy] Assume that $S-X-\tilde{X}$ forms a Markov chain, i.e., $P_{\tilde{X}|X,S}(\tilde{x}|x,s)=P_{\tilde{X}|X}(\tilde{x}|x),s\in\{0,1\}$. We say $P_{\tilde{X}|X}(\tilde{x}|x)$ is $\epsilon$-differentially private ($\epsilon$-DP) if 
\begin{align*}
{P_{\tilde{X}|X}(\tilde{x}|x)}{}\leq \exp(\epsilon)P_{\tilde{X}|X}(\tilde{x}|x'),\quad \forall x,x'\in\mathbb{X}.
\end{align*}
\end{definition}

\begin{corollary} \label{cor:1}
Assume that $P_{\tilde{X}|X}$ is $\epsilon$-differentially private. Then,
$d_{\mathrm{TV}}(\tilde{Q}_1,\tilde{Q}_0)
\leq 1-\exp(-\epsilon).$
\end{corollary}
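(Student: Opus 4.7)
The plan is to reduce the claim to a pointwise inequality on the summands of $d_{\mathrm{TV}}(\tilde{Q}_1,\tilde{Q}_0) = \tfrac{1}{2}\sum_{\tilde{x}} |\tilde{Q}_1(\tilde{x}) - \tilde{Q}_0(\tilde{x})|$, obtained by transferring the $\epsilon$-DP ratio bound from the kernel $P_{\tilde{X}|X}(\tilde{x}|\cdot)$ to its marginalisations $\tilde{Q}_0$ and $\tilde{Q}_1$, and then summing using that the $\tilde{Q}_s$ are probability distributions.

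First, under the Markov hypothesis $S-X-\tilde{X}$ built into the corollary, I would use the definition
\[
\tilde{Q}_s(\tilde{x}) = \sum_{x\in\mathbb{X}} P_{\tilde{X}|X}(\tilde{x}|x)\, \mathbb{P}\{X=x|S=s\},
\]
which exposes $\tilde{Q}_s(\tilde{x})$ as a convex combination of the numbers $\{P_{\tilde{X}|X}(\tilde{x}|x)\}_{x\in\mathbb{X}}$ with $s$-dependent weights.

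The key step is the transfer of the DP bound to these mixtures. The hypothesis gives $P_{\tilde{X}|X}(\tilde{x}|x) \leq \exp(\epsilon)\, P_{\tilde{X}|X}(\tilde{x}|x')$ for every pair $x,x'\in\mathbb{X}$. Multiplying the left side by $\mathbb{P}\{X=x|S=1\}$ and summing over $x$, and multiplying the right side by $\mathbb{P}\{X=x'|S=0\}$ and summing over $x'$, yields $\tilde{Q}_1(\tilde{x}) \leq \exp(\epsilon)\, \tilde{Q}_0(\tilde{x})$, and symmetrically the reverse inequality. Hence $\min(\tilde{Q}_1(\tilde{x}), \tilde{Q}_0(\tilde{x})) \geq \exp(-\epsilon)\, \max(\tilde{Q}_1(\tilde{x}), \tilde{Q}_0(\tilde{x}))$, so
\[
|\tilde{Q}_1(\tilde{x}) - \tilde{Q}_0(\tilde{x})| \leq (1-\exp(-\epsilon)) \max(\tilde{Q}_1(\tilde{x}), \tilde{Q}_0(\tilde{x})) \leq (1-\exp(-\epsilon))(\tilde{Q}_1(\tilde{x}) + \tilde{Q}_0(\tilde{x})).
\]

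To finish, I would sum over $\tilde{x}\in\mathbb{X}$, using $\sum_{\tilde{x}} \tilde{Q}_s(\tilde{x}) = 1$ for $s\in\{0,1\}$, to get $2d_{\mathrm{TV}}(\tilde{Q}_1,\tilde{Q}_0) \leq 2(1-\exp(-\epsilon))$, which is the claimed bound. I do not foresee a real obstacle; the only conceptual moment is noticing that a DP ratio bound holding between individual kernel values is automatically inherited by any pair of convex combinations of them, which is immediate from linearity of the expectations involved.
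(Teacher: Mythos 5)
Your proof is correct, but it takes a genuinely different route from the paper's. The paper invokes the Dobrushin ergodicity coefficient (a strong data-processing / contraction result for the channel $P_{\tilde{X}|X}$), bounding $d_{\mathrm{TV}}(\tilde{Q}_1,\tilde{Q}_0)\leq 1-\sum_{\tilde{x}}\min_{x}P_{\tilde{X}|X}(\tilde{x}|x)$ and then lower-bounding the row-minimum sum by $\exp(-\epsilon)$ via the DP condition. You instead push the multiplicative DP bound through the mixtures to get the pointwise statement $e^{-\epsilon}\tilde{Q}_0(\tilde{x})\leq \tilde{Q}_1(\tilde{x})\leq e^{\epsilon}\tilde{Q}_0(\tilde{x})$, and then convert multiplicative closeness of two distributions into a total-variation bound by summing. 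Your transfer step is sound (convex combinations inherit the ratio bound by linearity), and the final summation using $\sum_{\tilde{x}}\tilde{Q}_s(\tilde{x})=1$ delivers exactly the claimed constant. What each approach buys: the paper's argument is a one-line appeal to a known coefficient and makes clear the bound is a property of the channel alone, independent of the input laws $Q_0,Q_1$; yours is elementary and self-contained, and in fact leaves room on the table --- since $\sum_{\tilde{x}}\max(\tilde{Q}_1(\tilde{x}),\tilde{Q}_0(\tilde{x}))=1+d_{\mathrm{TV}}(\tilde{Q}_1,\tilde{Q}_0)$, your penultimate display actually yields the strictly sharper bound $d_{\mathrm{TV}}(\tilde{Q}_1,\tilde{Q}_0)\leq (1-e^{-\epsilon})/(1+e^{-\epsilon})$, which implies the corollary but improves on it.
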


\begin{proof}
The analytical expression for the Dobrushin  ergodicity  coefficient~\cite{Dobrushin1956} and~\cite[Equation~(29)]{polyanskiy2015dissipation} implies that
$
d_{\mathrm{TV}}(\tilde{Q}_1,\tilde{Q}_0)
\leq 1-\sum_{\tilde{x}\in\mathbb{X}} \min_{x\in\range{X}} P_{\tilde{X}|X}(\tilde{x}|x).
$
Therefore, 
\begin{align*}
d_{\mathrm{TV}}(\tilde{Q}_1,\tilde{Q}_0)
&\leq 1-\sum_{\tilde{x}\in\mathbb{X}} \min_{x\in\range{X}} P_{\tilde{X}|X}(\tilde{x}|x)\\
&\leq 1-\sum_{\tilde{x}\in\mathbb{X}} \min_{x\in\range{X}} \exp(-\epsilon) P_{\tilde{X}|X}(\tilde{x}|x')\\
&=1-\exp(-\epsilon)\sum_{\tilde{x}\in\mathbb{X}} P_{\tilde{X}|X}(\tilde{x}|x')\\
&=1-\exp(-\epsilon).
\end{align*}
This concludes the proof.
\end{proof}

\begin{figure}
	\centering
	\begin{tikzpicture}
	\node[] at (0,0) {\includegraphics[width=.95\columnwidth]{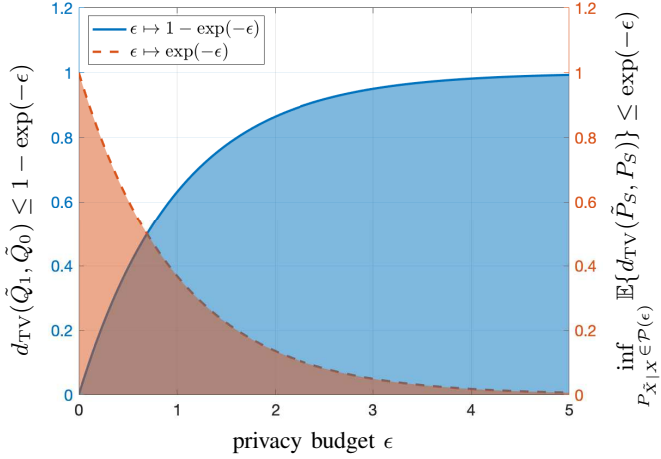}};
	\node[] at (0,-3.1) {\small privacy budget $\epsilon$};
	\node[rotate=90] at (-3.9,0) {\small  $d_{\mathrm{TV}}(\tilde{Q}_1,\tilde{Q}_0)\leq 1-\exp(-\epsilon)$};
	\node[rotate=90] at (+4.3,0) {\small  $\displaystyle\inf_{P_{\tilde{X}|X}\in \mathcal{P}(\epsilon)}  \mathbb{E}\{d_{\mathrm{TV}}(\tilde{P}_S,P_S)\}
		\leq \exp(-\epsilon)$};		
	\end{tikzpicture}
	\caption{
		\label{fig:1}
		Relationship between the upper bound for the disparate impact [solid] and utility degradation [dashed] versus the differential privacy budget $\epsilon$. Differential privacy is effective in data repair to achieve fairness in small privacy budget regime $\epsilon\ll 1$, albeit at the expense of utility.
	}
\end{figure}

Corollary~\ref{cor:1} shows that, by reducing the privacy budget $\epsilon$, i.e., improving the privacy level, a higher fairness in the sense of disparate impact can be achieved. 

Intuitively, we can achieve fairness if we cannot infer an individual's sensitive/protected attribute from the available data. This way, there is no way to construct unfair models. This is investigated in the following proposition. 

\begin{proposition}
Assume that an adversary aims to estimate $S$ from $\tilde{X}$ using function $\psi:\range{X}\rightarrow \{0,1\}$. Then,
\begin{align*}
\inf_{\psi}\max_{s\in\{0,1\}}\mathbb{P}\{\psi(\tilde{X})\neq S|S=s\} \geq \frac{1}{2}(1-d_{\mathrm{TV}}(\tilde{Q}_1,\tilde{Q}_0))
\end{align*}
\end{proposition}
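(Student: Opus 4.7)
The plan is to reduce this to a classical two-hypothesis testing lower bound in the style of Le Cam. First I would fix an arbitrary decision rule $\psi:\mathbb{X}\to\{0,1\}$ and let $A=\{\tilde{x}\in\mathbb{X} : \psi(\tilde{x})=1\}$. Since the conditional law of $\tilde{X}$ given $S=s$ is $\tilde{Q}_s$, the two conditional error probabilities can be read off directly as the Type~I error $\mathbb{P}\{\psi(\tilde{X})\neq S|S=0\}=\tilde{Q}_0(A)$ and the Type~II error $\mathbb{P}\{\psi(\tilde{X})\neq S|S=1\}=1-\tilde{Q}_1(A)$.

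Next I would lower-bound the maximum of the two errors by their average to obtain
$$\max_{s\in\{0,1\}}\mathbb{P}\{\psi(\tilde{X})\neq S|S=s\}\geq \tfrac{1}{2}\bigl(\tilde{Q}_0(A)+1-\tilde{Q}_1(A)\bigr)=\tfrac{1}{2}\bigl(1-(\tilde{Q}_1(A)-\tilde{Q}_0(A))\bigr).$$
To finish I would invoke the standard variational characterization $\sup_{B\subseteq\mathbb{X}}|\tilde{Q}_1(B)-\tilde{Q}_0(B)|=d_{\mathrm{TV}}(\tilde{Q}_1,\tilde{Q}_0)$, which is consistent with the $\tfrac{1}{2}\sum_{\tilde{x}}|\tilde{Q}_1(\tilde{x})-\tilde{Q}_0(\tilde{x})|$ normalization implicit in the proof of Theorem~1 and follows by splitting the sum at the sign change of $\tilde{Q}_1-\tilde{Q}_0$. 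Applied with $B=A$, this gives $\tilde{Q}_1(A)-\tilde{Q}_0(A)\leq d_{\mathrm{TV}}(\tilde{Q}_1,\tilde{Q}_0)$, so chaining the two displayed inequalities delivers the claimed bound for every $\psi$, and taking the infimum over $\psi$ preserves it.

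There is no substantive obstacle; the only point demanding care is matching the constant coming from the paper's normalization of $d_{\mathrm{TV}}$. As a sanity check, the bound is tight at the two extremes: when $\tilde{Q}_0=\tilde{Q}_1$ it collapses to $1/2$ (random guessing is the best an adversary can do), and when $\tilde{Q}_0$ and $\tilde{Q}_1$ have disjoint supports it reduces to $0$ (a perfect test exists). This tightness also indicates that the constant $1/2$ in front of $(1-d_{\mathrm{TV}}(\tilde{Q}_1,\tilde{Q}_0))$ cannot be improved without further assumptions, which is reassuring for the proposition's interpretation as the fundamental adversarial inference limit induced by the repaired distribution.
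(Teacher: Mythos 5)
Your proof is correct and is essentially the standard argument behind the result the paper simply cites (Tsybakov's Theorem 2.2(i), i.e., Le Cam's two-point testing bound): acceptance set, max bounded by average, and the variational characterization of total variation. The only difference is that you spell out the proof that the paper delegates to the reference, and your normalization of $d_{\mathrm{TV}}$ matches the paper's.
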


\begin{proof}
The proof follows from~\cite[Thereom~2.2~(\textit{i})]{tsybakov2008introduction}.
\end{proof}

\section{Utility}
The fairness often comes at the price of utility, i.e., we might not be able to forecast $Y$ as accurately after enforcing fairness. Before data repair/pre-processing, the utility is given by $\mathbb{E}\{\ell(\mathcal{M}(X),Y)\}$, where $\ell(\mathcal{M}(X),Y)$ is measure of closeness between the forecast $\mathcal{M}(X)$ and the actual output $Y$. After data repair, the utility is given by $\mathbb{E}\{\ell(\mathcal{M}(\tilde{X}),Y)\}$. Therefore, the utility degradation is captured by the difference between $\mathbb{E}\{\ell(\mathcal{M}(X),Y)\}$ and $\mathbb{E}\{\ell(\mathcal{M}(\tilde{X}),Y)\}$, i.e.,
\begin{align*}
\mathfrak{U}(\mathcal{M},P_{\tilde{X}|X,S}):=|\mathbb{E}\{\ell(\mathcal{M}(\tilde{X}),Y)\}\hspace{-.03in}-\hspace{-.03in}\mathbb{E}\{\ell(\mathcal{M}(X),Y)\}|.
\end{align*}
Before bounding the utility degradation in the next theorem, we need to define the following notation. In the remainder of this paper, we use the notations:
\begin{align*}
\tilde{P}_s(\tilde{x},y)&=\sum_{x\in\mathbb{X}} P_{\tilde{X}|X,S}(\tilde{x}|x,s) \mathbb{P}\{X=x,Y=y|S=s\},
\\
P_s(x,y)&=\mathbb{P}\{X=x,Y=y|S=s\}.
\end{align*}
Note that $\sum_{y\in\range{Y}}\tilde{P}_s(\tilde{x},y)=\tilde{Q}_s(\tilde{x})$. 

\begin{theorem} Assume that $\ell:\mathbb{Y}\times\mathbb{Y}\rightarrow [-\sigma,\sigma]$ for bounded $\sigma>0.$ Then,
$
\mathfrak{U}(\mathcal{M},P_{\tilde{X}|X,S})	\leq 2\sigma \mathbb{E}\{d_{\mathrm{TV}}(\tilde{P}_S,P_S)\}.
$
\end{theorem}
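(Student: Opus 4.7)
The plan is to reduce the unconditional utility gap to a conditional one indexed by $S$, and then apply a standard total-variation bound on the difference of expectations of a bounded function.

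First, I would condition on $S$ and use the triangle inequality for expectations to write
\begin{align*}
\mathfrak{U}(\mathcal{M}, P_{\tilde{X}|X,S})
&= \bigl|\mathbb{E}\bigl\{\mathbb{E}\{\ell(\mathcal{M}(\tilde{X}),Y)-\ell(\mathcal{M}(X),Y)\mid S\}\bigr\}\bigr| \\
&\leq \mathbb{E}\bigl\{\bigl|\mathbb{E}\{\ell(\mathcal{M}(\tilde{X}),Y)\mid S\}-\mathbb{E}\{\ell(\mathcal{M}(X),Y)\mid S\}\bigr|\bigr\}.
\end{align*}
This reduces the problem to bounding, for each $s\in\{0,1\}$, the inner difference by $2\sigma\, d_{\mathrm{TV}}(\tilde{P}_s,P_s)$; taking expectations with respect to $S$ then yields the claimed bound.

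Next, for fixed $s$, I would expand both conditional expectations as sums against the joint distributions $\tilde{P}_s(\tilde{x},y)$ and $P_s(x,y)$ introduced just before the theorem. Since $\tilde{X},X\in\mathbb{X}$ share a common alphabet, the difference becomes a single sum of the form $\sum_{x,y}\ell(\mathcal{M}(x),y)\bigl[\tilde{P}_s(x,y)-P_s(x,y)\bigr]$. Applying $|\ell(\mathcal{M}(x),y)|\leq\sigma$ and the triangle inequality in the style of the first theorem's proof gives
\begin{align*}
\bigl|\mathbb{E}\{\ell(\mathcal{M}(\tilde{X}),Y)\mid S=s\}-\mathbb{E}\{\ell(\mathcal{M}(X),Y)\mid S=s\}\bigr|
&\leq \sigma \sum_{x,y}\bigl|\tilde{P}_s(x,y)-P_s(x,y)\bigr| \\
&= 2\sigma\, d_{\mathrm{TV}}(\tilde{P}_s,P_s),
\end{align*}
using the convention $d_{\mathrm{TV}}(P,Q)=\tfrac{1}{2}\sum|P-Q|$ that the paper adopts (as is evident from the factor of $2$ appearing in Theorem~1).

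Combining this pointwise (in $s$) inequality with the first display gives $\mathfrak{U}(\mathcal{M},P_{\tilde{X}|X,S})\leq 2\sigma\mathbb{E}\{d_{\mathrm{TV}}(\tilde{P}_S,P_S)\}$, as desired. I do not anticipate a real obstacle: the argument is essentially the standard ``bounded test function'' characterization of total variation, with the mild bookkeeping step being that the natural coupling here is through the shared conditioning on $S$, which is why the bound ends up as an expectation of $d_{\mathrm{TV}}(\tilde{P}_S,P_S)$ rather than a single total-variation distance. The only thing worth double-checking is the constant: using the $\tfrac{1}{2}$-normalized TV convention, the worst-case slack of a function with range in $[-\sigma,\sigma]$ is $2\sigma$, so the factor $2\sigma$ is tight and consistent with Theorem~1.
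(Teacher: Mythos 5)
Your proof is correct and arrives at the stated bound, but the key inner step differs from the paper's. The paper bounds the conditional difference by a change of measure: it rewrites $\mathbb{E}_{(X,Y)\sim P_s}\{\ell(\mathcal{M}(X),Y)\}$ as an expectation under $\tilde{P}_s$ weighted by the likelihood ratio $P_s(X,Y)/\tilde{P}_s(X,Y)$, pulls out $\sigma$, and applies Jensen's inequality to reach $\sigma\,\mathbb{E}_{(\tilde X,Y)\sim\tilde{P}_s}\{|1-P_s(\tilde X,Y)/\tilde{P}_s(\tilde X,Y)|\}$, which equals $2\sigma\, d_{\mathrm{TV}}(\tilde{P}_s,P_s)$. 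You instead write the difference directly as $\sum_{x,y}\ell(\mathcal{M}(x),y)\bigl[\tilde{P}_s(x,y)-P_s(x,y)\bigr]$ and invoke the bounded-test-function characterization of total variation. The two are equivalent in substance, but your route is more elementary and slightly more robust: the paper's ratio $P_s/\tilde{P}_s$ is only well defined where $\tilde{P}_s(x,y)>0$, an absolute-continuity caveat your direct summation sidesteps. You also make explicit the outer step --- the tower property over $S$ followed by the triangle inequality, which converts the per-$s$ bound into $2\sigma\,\mathbb{E}\{d_{\mathrm{TV}}(\tilde{P}_S,P_S)\}$ --- whereas the paper's displayed chain stops at the conditional difference and leaves that aggregation implicit. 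Both arguments share the same unstated modeling assumption, namely that $\tilde{P}_s$ as defined is indeed the joint law of $(\tilde{X},Y)$ given $S=s$, i.e., that $\tilde{X}$ is generated from $(X,S)$ conditionally independently of $Y$.
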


\begin{figure*}[t!]
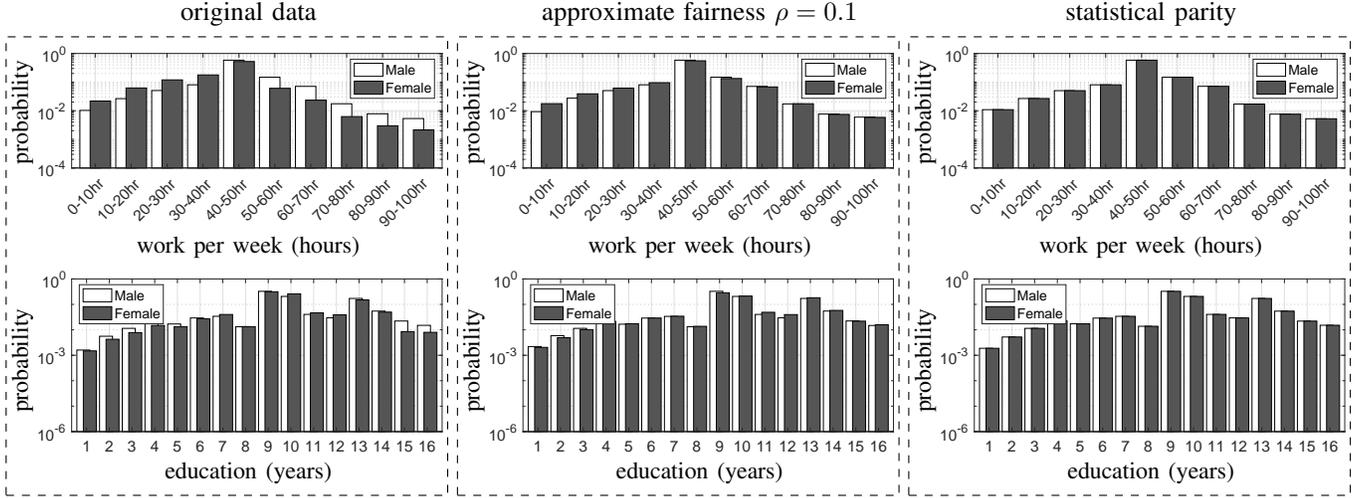

	\centering
	\begin{tikzpicture}
	\node[] at (0,0) {\includegraphics[width=.35\linewidth]{hist_original_hr.eps}};
	\node[] at (0,-1.5) {\small work per week (hours)};
	\node[rotate=90] at (-3.0,.3) {\small  probability};
	\node[] at (6,0) {\includegraphics[width=.35\linewidth]{hist_fair_hr_rho_0_1.eps}};
	\node[] at (6,-1.5) {\small work per week (hours)};
	\node[rotate=90] at (3,.3) {\small  probability};
	\node[] at (12,0) {\includegraphics[width=.35\linewidth]{hist_fair_hr_rho_0.eps}};	
	\node[] at (12,-1.5) {\small work per week (hours)};
	\node[rotate=90] at (9,.3) {\small  probability};	
	\node[] at (0,-3) {\includegraphics[width=.35\linewidth]{hist_original_edu.eps}};	
	\node[rotate=90] at (-3,-3) {\small  probability};	
	\node[] at (0,-4.5) {\small education (years)};	
	\node[] at (6,-3) {\includegraphics[width=.35\linewidth]{hist_fair_edu_rho_0_1.eps}};	
	\node[rotate=90] at (3,-3) {\small  probability};	
	\node[] at (6,-4.5) {\small education (years)};		
	\node[] at (12,-3) {\includegraphics[width=.35\linewidth]{hist_fair_edu_rho_0.eps}};			
	\node[rotate=90] at (9,-3) {\small  probability};	
	\node[] at (12,-4.5) {\small education (years)};	
	\draw[-,dashed] (-3.22,-4.8)	-- (2.65,-4.8) -- (2.65,1.3) -- (-3.22,1.3) -- cycle;
	\draw[xshift=6cm,-,dashed] (-3.22,-4.8)	-- (2.65,-4.8) -- (2.65,1.3) -- (-3.22,1.3) -- cycle;
	\draw[xshift=12cm,-,dashed] (-3.22,-4.8)	-- (2.65,-4.8) -- (2.65,1.3) -- (-3.22,1.3) -- cycle;	
	\node[] at (0,1.6) {original data};
	\node[] at (6,1.6) {approximate fairness $\rho=0.1$};
	\node[] at (12,1.6) {statistical parity};		
	\end{tikzpicture}
	\caption{
		\label{fig:3}
		Histogram of work per week and education for male and female individuals for the original data [left], pre-processed data with approximate fairness of level $\rho=0.1$ [middle], and pre-processed data to ensure statistical parity [right]. 
	}
\end{figure*}

\begin{figure}
	\centering
	\begin{tikzpicture}
	\node[] at (0,0) {\includegraphics[width=.95\columnwidth]{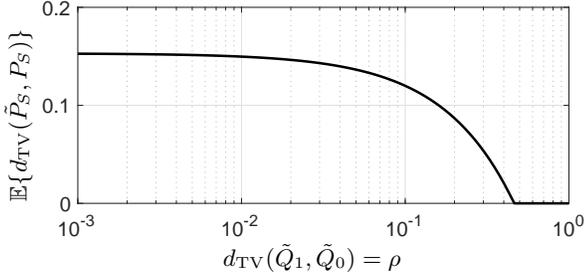}};
	\node[] at (0,-1.9) {\small $d_{\mathrm{TV}}(\tilde{Q}_1,\tilde{Q}_0)=\rho$};
	\node[rotate=90] at (-3.9,0) {\small $ \mathbb{E}\{d_{\mathrm{TV}}(\tilde{P}_S,P_S)\}$};	
	\end{tikzpicture}
	\caption{
		\label{fig:2}
		The upper bound on utility degradation versus the upper bound on disparate impact. The utility improves (degradation decreases) as the disparate impact gets larger. 
	}
\end{figure}

\begin{proof}
	We get
	\begin{align*}
	|\mathbb{E}\{\ell(\mathcal{M}&(\tilde{X}),Y)|S=s\}- \mathbb{E}\{\ell(\mathcal{M}(X),Y)|S=s\}|\\
	&=|\mathbb{E}_{(\tilde{X},Y)\sim \tilde{P}_s}\{\ell(\mathcal{M}(\tilde{X}),Y)\}\\
	&\qquad -\mathbb{E}_{(X,Y)\sim P_s}\{\ell(\mathcal{M}(X),Y)\}|\\
	&=\Bigg|\mathbb{E}_{(\tilde{X},Y)\sim \tilde{P}_s}\{\ell(\mathcal{M}(\tilde{X}),Y)\}
	\\
	&\qquad-\mathbb{E}_{(X,Y)\sim \tilde{P}_s}\Bigg\{\ell(\mathcal{M}(X),Y)\frac{P_s(X,Y)}{\tilde{P}_s(X,Y)}\Bigg\}\Bigg|\\
	&=\Bigg|\mathbb{E}_{(\tilde{X},Y)\sim \tilde{P}_s}\Bigg\{\ell(\mathcal{M}(\tilde{X}),Y)\\
	&\qquad\qquad\qquad\qquad-\ell(\mathcal{M}(\tilde{X}),Y)\frac{P_s(\tilde{X},Y)}{\tilde{P}_s(\tilde{X},Y)}\Bigg\}\Bigg|\\
	&\leq \sigma \Bigg|\mathbb{E}_{(\tilde{X},Y)\sim \tilde{P}_s}\Bigg\{1-\frac{P_s(\tilde{X},Y)}{\tilde{P}_s(\tilde{X},Y)}\Bigg\}\Bigg|\\
	&\leq \sigma\mathbb{E}_{(\tilde{X},Y)\sim \tilde{P}_s}\Bigg\{ \Bigg|1-\frac{P_s(\tilde{X},Y)}{\tilde{P}_s(\tilde{X},Y)}\Bigg|\Bigg\},
	\end{align*}
	where the last inequality follows from the Jensen's inequality~\cite[Theorem~3.9.7]{mukhopadhyay2000probability}.
\end{proof}

\begin{corollary} The followings hold:
	\begin{align*}
	|\mathbb{P}\{M(\tilde{X})=Y\}-\mathbb{P}\{\mathcal{M}(X)=Y)\}|&\leq 2\mathbb{E}\{d_{\mathrm{TV}}(\tilde{P}_S,P_S)\},\\
	|\mathbb{P}\{M(\tilde{X})\neq Y\}-\mathbb{P}\{\mathcal{M}(X)\neq Y)\}|&\leq 2 \mathbb{E}\{d_{\mathrm{TV}}(\tilde{P}_S,P_S)\}.
	\end{align*}
\end{corollary}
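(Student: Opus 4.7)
The plan is to apply the preceding theorem directly, with two specific choices of the loss function $\ell$. Since $\mathfrak{U}(\mathcal{M},P_{\tilde{X}|X,S})$ is defined as an absolute difference of expected losses, I only need to pick a loss that reduces each of the two expectations to the desired probability.

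For the first inequality, I would take $\ell(a,b)=\mathds{1}_{a=b}$. This function maps $\mathbb{Y}\times\mathbb{Y}$ into $[0,1]\subseteq[-1,1]$, so the boundedness hypothesis of the previous theorem holds with $\sigma=1$. Under this choice, $\mathbb{E}\{\ell(\mathcal{M}(X),Y)\}=\mathbb{P}\{\mathcal{M}(X)=Y\}$ and $\mathbb{E}\{\ell(\mathcal{M}(\tilde{X}),Y)\}=\mathbb{P}\{\mathcal{M}(\tilde{X})=Y\}$, so $\mathfrak{U}(\mathcal{M},P_{\tilde{X}|X,S})$ is exactly $|\mathbb{P}\{\mathcal{M}(\tilde{X})=Y\}-\mathbb{P}\{\mathcal{M}(X)=Y\}|$. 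Plugging into $\mathfrak{U}(\mathcal{M},P_{\tilde{X}|X,S})\leq 2\sigma\mathbb{E}\{d_{\mathrm{TV}}(\tilde{P}_S,P_S)\}$ with $\sigma=1$ yields the first bound. For the second inequality, repeat the argument with $\ell(a,b)=\mathds{1}_{a\neq b}$, which also lies in $[0,1]\subseteq[-1,1]$, so again $\sigma=1$ works; the expectations become the misclassification probabilities, and the theorem gives the stated bound. (Alternatively, the second inequality follows from the first by noting $\mathbb{P}\{\mathcal{M}(\tilde{X})\neq Y\}=1-\mathbb{P}\{\mathcal{M}(\tilde{X})=Y\}$, so the two differences are equal in absolute value.)

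There is essentially no obstacle: the result is a direct specialization of the previous theorem to the $0/1$ loss. The only thing to double-check is the boundedness of the indicator loss, which is immediate.
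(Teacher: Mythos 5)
Your proof is correct and is exactly the intended argument: the paper states this corollary without proof, as an immediate specialization of the preceding theorem to the $0/1$ loss $\ell(a,b)=\mathds{1}_{a=b}$ (resp.\ $\mathds{1}_{a\neq b}$) with $\sigma=1$. Your observation that the second inequality also follows from the first via $\mathbb{P}\{\mathcal{M}(\tilde{X})\neq Y\}=1-\mathbb{P}\{\mathcal{M}(\tilde{X})=Y\}$ is a correct bonus remark.
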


%

\begin{corollary} \label{cor:proof}
Assume that $S-X-\tilde{X}$ forms a Markov chain. Let $\mathcal{P}(\epsilon)$ denote the set of $\epsilon$-differentially private conditional probabilities $P_{\tilde{X}|X}$. Then,
	$\inf_{P_{\tilde{X}|X}\in \mathcal{P}(\epsilon)}  \mathbb{E}\{d_{\mathrm{TV}}(\tilde{P}_S,P_S)\}
	\leq \exp(-\epsilon).$
\end{corollary}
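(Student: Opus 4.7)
The plan is to exhibit a specific $\epsilon$-DP channel meeting the bound, since the statement is an infimum. A natural candidate is the convex mixture
\[
P^{\star}_{\tilde X|X}(\tilde x\mid x)=(1-e^{-\epsilon})\,\mathds{1}\{\tilde x=x\}+e^{-\epsilon}\,\mu(\tilde x),
\]
where $\mu$ is a reference distribution on $\mathbb X$. The identity piece keeps most of the mass on $\tilde x=x$, which is what will drive $\tilde P_s$ toward $P_s$, while the $e^{-\epsilon}$-weighted marginal $\mu$ is meant to supply the diffusion needed for privacy.

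Once the channel is in hand, the argument collapses to a single substitution. Plugging $P^{\star}$ into the definition of $\tilde P_s$ and using the Markov assumption $S-X-\tilde X$ gives
\[
\tilde P_s(\tilde x,y)=(1-e^{-\epsilon})\,P_s(\tilde x,y)+e^{-\epsilon}\,\mu(\tilde x)\,P_s^{Y}(y),
\]
where $P_s^{Y}(y)=\sum_x P_s(x,y)$. Subtracting $P_s(\tilde x,y)$ yields the pointwise identity $\tilde P_s-P_s=e^{-\epsilon}\,[\mu\otimes P_s^{Y}-P_s]$, and taking $\ell^{1}$-norms together with the crude bound $d_{\mathrm{TV}}(\mu\otimes P_s^{Y},P_s)\le 1$ delivers $d_{\mathrm{TV}}(\tilde P_s,P_s)\le e^{-\epsilon}$. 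Averaging over $S$ then finishes the claim.

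The main obstacle is verifying that $P^{\star}$ is actually in $\mathcal P(\epsilon)$. The worst ratio occurs at $\tilde x=x\ne x'$, where $P^{\star}(\tilde x\mid x)/P^{\star}(\tilde x\mid x')=1+(e^\epsilon-1)/\mu(x)$, forcing $\mu(x)\ge 1$ at every $x$, which is incompatible with $|\mathbb X|>1$. I would repair this by either (i) softening the identity piece into the exponential-mechanism kernel $P(\tilde x\mid x)\propto\exp(-\epsilon\,d(\tilde x,x)/2)$, which is genuinely $\epsilon$-DP, absorbing a bounded multiplicative slack into the TV estimate; or (ii) applying the Dobrushin-style decomposition $P_{\tilde X|X}(\tilde x\mid x)=e^{-\epsilon}g(\tilde x)+(1-e^{-\epsilon})r(\tilde x\mid x)$ already invoked in the proof of Corollary~\ref{cor:1} to any $\epsilon$-DP channel, and then bounding the residual's contribution to $\tilde P_s-P_s$ directly. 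Either route preserves the algebraic miracle that the $e^{-\epsilon}$ prefactor from the mixture propagates verbatim into the total variation gap.
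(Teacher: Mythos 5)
Your overall strategy---exhibit one witness channel of the form ``identity with weight $1-e^{-\epsilon}$ plus a diffusion term with weight $e^{-\epsilon}$'' and read the total variation gap off the mixture---is exactly the strategy of the paper's proof, and your algebra is actually cleaner: the identity $\tilde P_s-P_s=e^{-\epsilon}\bigl[\mu\otimes P_s^{Y}-P_s\bigr]$ gives $d_{\mathrm{TV}}(\tilde P_s,P_s)\leq e^{-\epsilon}$ in one line, whereas the paper's term-by-term expansion (with $\mu$ equal to the uniform distribution on $\range{X}\setminus\{x\}$) only reaches $\tfrac{|\range{X}|}{|\range{X}|-1}e^{-\epsilon}$. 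But as submitted this is not a proof, because the one thing the argument must supply---a channel that actually lies in $\mathcal P(\epsilon)$ and for which the computation survives---is never produced. You correctly verify that your candidate violates $\epsilon$-DP (the diagonal/off-diagonal ratio $1+(e^{\epsilon}-1)/\mu(x)$ forces $\mu(x)\geq 1$), and then stop at two sketched repairs.

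Neither repair closes the gap as stated. Repair (ii) is a dead end in principle: the corollary is an infimum and cannot be established for \emph{every} $\epsilon$-DP channel, since the constant channel $P_{\tilde X|X}(\tilde x|x)=\mu(\tilde x)$ is $\epsilon$-DP for all $\epsilon$ yet makes $d_{\mathrm{TV}}(\tilde P_s,P_s)$ as large as $1$ for generic $P_s$; in the Dobrushin decomposition $e^{-\epsilon}g(\tilde x)+(1-e^{-\epsilon})r(\tilde x|x)$ nothing forces the residual $r(\cdot|x)$ toward $\delta_x$, so the best you can extract is the vacuous bound $1$. Repair (i) changes the mixture weight, and it is not routine that the new prefactor is still $\leq e^{-\epsilon}$: the natural genuinely $\epsilon$-DP choice, randomized response with diagonal $e^{\epsilon}/(e^{\epsilon}+|\range{X}|-1)$, gives $\tilde P_s-P_s=\tfrac{|\range{X}|}{e^{\epsilon}+|\range{X}|-1}\bigl[u\otimes P_s^{Y}-P_s\bigr]$, whose prefactor exceeds $e^{-\epsilon}$ for every $|\range{X}|\geq 2$ and $\epsilon>0$. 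For what it is worth, the objection you raise against your own channel applies verbatim to the paper's witness, whose ``easily checked'' DP claim fails, e.g., for $|\range{X}|=2$ the off-diagonal/diagonal ratio is $1/(e^{\epsilon}-1)$, which exceeds $e^{\epsilon}$ for small $\epsilon$. So you have identified a real weak point, but identifying it does not discharge it: the corollary still needs a concrete, verified member of $\mathcal P(\epsilon)$ achieving the $e^{-\epsilon}$ bound, and your proposal does not provide one.
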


\begin{proof} 
	Define 
	\begin{align*}
	P_{\tilde{X}|X}(\tilde{x}|x)
	=
	\begin{cases}
	\displaystyle 1-\exp(-\epsilon), & \tilde{x}=x,\\[.2em]
	\displaystyle \frac{\exp(-\epsilon)}{(|\range{X}|-1)}, & \tilde{x}\neq x.
	\end{cases}
	\end{align*}
	It can be easily checked that $P_{\tilde{X}|X}(\tilde{x}|x)$ is $\epsilon$ differentially private. We have
	\begin{align*}
	2&d_{\mathrm{TV}}(\tilde{P}_s,P_s)
	\\&=
	\sum_{\tilde{x}\in\range{X},y\in\range{Y}}|\tilde{P}_s(\tilde{x},y)-P_s(\tilde{x},y)|\\
	&=\sum_{\tilde{x}\in\range{X},y\in\range{Y}}\Bigg|\sum_{x\in\mathbb{X}} P_{\tilde{X}|X}(\tilde{x}|x) \mathbb{P}\{X=x,Y=y|S=s\}\\
	&\hspace{.7in} -\mathbb{P}\{X=\tilde{x},Y=y|S=s\}\Bigg|\\
	&= \sum_{\tilde{x}\in\range{X},y\in\range{Y}}\Bigg|\sum_{x\in\mathbb{X},x\neq \tilde{x}}\hspace{-.1in} P_{\tilde{X}|X}(\tilde{x}|x) \mathbb{P}\{X=x,Y=y|S=s\}\\
	&\hspace{.7in} +(P_{\tilde{X}|X}(\tilde{x}|\tilde{x})-1)\mathbb{P}\{X=\tilde{x},Y=y|S=s\}\Bigg|
	\\
	&= \sum_{\tilde{x}\in\range{X},y\in\range{Y}}\Bigg|\sum_{x\in\mathbb{X},x\neq \tilde{x}}\hspace{-.1in} \frac{\exp(-\epsilon)}{(|\range{X}|-1)} \mathbb{P}\{X=x,Y=y|S=s\}\\
	&\hspace{.7in} -\exp(-\epsilon)\mathbb{P}\{X=\tilde{x},Y=y|S=s\}\Bigg|\\
	&= \sum_{\tilde{x}\in\range{X},y\in\range{Y}}\Bigg|\frac{\exp(-\epsilon)}{(|\range{X}|-1)}\Big(\mathbb{P}\{Y=y|S=s\}\\
	&\hspace{1.5in}- \mathbb{P}\{X=\tilde{x},Y=y|S=s\}\Big)\\
	&\hspace{.7in} -\exp(-\epsilon)\mathbb{P}\{X=\tilde{x},Y=y|S=s\}\Bigg|\\
	&= \sum_{\tilde{x}\in\range{X},y\in\range{Y}}\Bigg|\frac{\exp(-\epsilon)}{(|\range{X}|-1)}\mathbb{P}\{Y=y|S=s\}\\
	&\hspace{.7in} -\frac{|\range{X}|}{|\range{X}|-1}\exp(-\epsilon)\mathbb{P}\{X=\tilde{x},Y=y|S=s\}\Bigg|\\
	&\leq \frac{2|\range{X}|}{|\range{X}|-1}\exp(-\epsilon).
	\end{align*}
	This concludes the proof. 
\end{proof}

This corollary establishes the already known result that privacy and utility are conflicting criteria. We cannot achieve one without sacrificing the other. Figure~\ref{fig:1} shows the relationship between the upper bound for the disparate impact [solid] and utility degradation [dashed] versus the differential privacy budget $\epsilon$. Differential privacy is effective in data repair to achieve fairness in small privacy budget regime $\epsilon\ll 1$, albeit at the expense of utility.

\section{Optimal Fairness-Utility Trade-off}
Therefore, the optimal data repair under approximate fairness can be constructed by solving the following optimization problem:
\begin{subequations}\label{eqn:repair_optim_approximate}
	\begin{align} 
	\mathbf{P}_{\rho}:\min_{P_{\tilde{X}|X,S}}&\mathbb{E}\{d_{\mathrm{TV}}(\tilde{P}_S,P_S)\},\\
	\mathrm{s.t.}\;\;  & d_{\mathrm{TV}}(\tilde{Q}_1(\tilde{x}),\tilde{Q}_0(\tilde{x}))\leq \rho.
	\end{align}
\end{subequations}
If we were to require exact fairness in the sense of statistical parity, rather than approximate fairness, the optimal data repair would be given by setting $\rho=0$ in $\mathbf{P}_\rho$. In this case, $\tilde{Q}_0=\tilde{Q}_1=\tilde{Q}$. We would also get
\begin{align*}
d_{\mathrm{TV}}(\tilde{P}_s,P_s)
&=\frac{1}{2}\sum_{\tilde{x}\in\range{X}}\sum_{y\in\range{Y}}|\tilde{P}_s(\tilde{x},y)-P_s(\tilde{x},y)|\\
&\geq \frac{1}{2}\sum_{\tilde{x}\in\range{X}}\left|\sum_{y\in\range{Y}}\tilde{P}_s(\tilde{x},y)-\sum_{y\in\range{Y}}P_s(\tilde{x},y)\right|\\
&=d_{\mathrm{TV}}(\tilde{Q}_s,Q_s),
\end{align*}
where $Q_s(x)=\mathbb{P}\{X=x|S=s\}$. Therefore, the solution to~\eqref{eqn:repair_optim_approximate} is lower bounded by
\begin{subequations}\label{eqn:barycenter}
	\begin{align} 
	\mathbf{P}_{\mathrm{bc}}:\min_{P_{\tilde{X}|X,S}}\pi_0d_{\mathrm{TV}}(\tilde{Q},Q_0)+\pi_1d_{\mathrm{TV}}(\tilde{Q},Q_1),
	\end{align}
\end{subequations}
where $\pi_s=\mathbb{P}\{S=s\}$. Note that $\tilde{Q}$ is the total variation barycenter of $Q_0$ and $Q_1$.

\begin{remark}[Computational Complexity]
The optimization problems in~\eqref{eqn:repair_optim_approximate} and~\eqref{eqn:barycenter} are linear programs with $2|\mathbb{X}|^2$ decision variables. Therefore, the computational complexity of solving them is $\mathcal{O}(|\mathbb{X}|^6)$ using the interior point method~\cite{POTRA2000281}. 
\end{remark}

\begin{proposition}
Let $\mathcal{S}_\rho$ denote the set of solutions of $\mathbf{P}_\rho$ in~\eqref{eqn:repair_optim_approximate}. Then, $\lim_{\rho\rightarrow 0}\mathrm{dist}(\mathbf{P}_\rho,\mathbf{P}_0)=0$.
\end{proposition}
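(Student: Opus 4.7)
The plan is to leverage three structural properties of $\mathbf{P}_\rho$: (i) monotonicity of the feasible set in $\rho$, (ii) compactness of the decision space, and (iii) continuity of the objective and constraint as functions of $P_{\tilde{X}|X,S}$. First I would note that the feasible set $F_\rho = \{P_{\tilde{X}|X,S} : d_{\mathrm{TV}}(\tilde{Q}_1,\tilde{Q}_0) \leq \rho\}$ is non-decreasing in $\rho$, so the optimal value $v(\rho) := \min_{F_\rho}\mathbb{E}\{d_{\mathrm{TV}}(\tilde{P}_S,P_S)\}$ is non-increasing. Also $\mathbf{P}_0$ is feasible: fixing any $x_0\in\mathbb{X}$ and setting $P_{\tilde{X}|X,S}(\tilde{x}|x,s)=\mathds{1}_{\tilde{x}=x_0}$ makes $\tilde{Q}_0=\tilde{Q}_1=\delta_{x_0}$, so $\mathcal{S}_0\neq\emptyset$.

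Next I would observe that the space of all conditional distributions $\{P_{\tilde{X}|X,S}\}$ is a product of finitely many probability simplices, hence a compact convex subset of a finite-dimensional Euclidean space. Both the objective $\mathbb{E}\{d_{\mathrm{TV}}(\tilde{P}_S,P_S)\}$ and the constraint function $d_{\mathrm{TV}}(\tilde{Q}_1,\tilde{Q}_0)$ are piecewise affine (and therefore continuous) in $P_{\tilde{X}|X,S}$, since $\tilde{P}_s$ and $\tilde{Q}_s$ depend linearly on $P_{\tilde{X}|X,S}$ and the total variation distance is a piecewise linear functional of a probability vector.

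The main step is a standard compactness-plus-continuity argument in the spirit of Berge's maximum theorem. Take any sequence $\rho_n \downarrow 0$ and select $P_n^\star \in \mathcal{S}_{\rho_n}$. By compactness, extract a subsequence $P_{n_k}^\star \to P^\star$. Passing to the limit in the constraint $d_{\mathrm{TV}}(\tilde{Q}_1,\tilde{Q}_0)\leq \rho_{n_k}$ yields $d_{\mathrm{TV}}(\tilde{Q}_1,\tilde{Q}_0)=0$ at $P^\star$, so $P^\star\in F_0$. Continuity of the objective gives $v(\rho_{n_k}) \to f(P^\star)$, while monotonicity gives $v(\rho_{n_k})\leq v(0)$; combined with $f(P^\star)\geq v(0)$ (since $P^\star$ is $\mathbf{P}_0$-feasible), we conclude $P^\star\in\mathcal{S}_0$. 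Since every subsequential limit lies in $\mathcal{S}_0$, this proves $\sup_{P\in\mathcal{S}_\rho}\mathrm{dist}(P,\mathcal{S}_0)\to 0$.

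The main obstacle is the precise meaning of $\mathrm{dist}(\mathbf{P}_\rho,\mathbf{P}_0)$. Under the natural one-sided-distance reading, the argument above suffices. If instead one reads it as the Hausdorff distance $d_H(\mathcal{S}_\rho,\mathcal{S}_0)$, the reverse direction $\sup_{P\in\mathcal{S}_0}\mathrm{dist}(P,\mathcal{S}_\rho)\to 0$ does not follow from abstract semicontinuity alone; here I would exploit the polyhedral nature of $\mathbf{P}_\rho$, noting that each $P\in\mathcal{S}_0$ is feasible for every $\mathbf{P}_\rho$ and, by convexity of $\mathcal{S}_\rho$ together with $v(\rho)\to v(0)$, any optimal $\mathbf{P}_0$-solution can be approximated by convex combinations of $\mathbf{P}_\rho$-optima whose objective values tend to $v(0)$. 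If the author instead interprets $\mathrm{dist}(\mathbf{P}_\rho,\mathbf{P}_0)$ as $|v(\rho)-v(0)|$, the claim reduces immediately to monotonicity plus the continuity argument of the preceding paragraph.
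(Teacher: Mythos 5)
Your argument is correct for the one-sided reading of $\mathrm{dist}(\mathbf{P}_\rho,\mathbf{P}_0)$ (the excess of $\mathcal{S}_\rho$ over $\mathcal{S}_0$), but it takes a genuinely different route from the paper. The paper's entire proof is a citation to Mangasarian's Lipschitz-continuity theorem for solution sets of linear programs under right-hand-side perturbations, which delivers the two-sided (Hausdorff) convergence with an explicit rate $O(\rho)$ essentially for free, at the cost of invoking polyhedral machinery (Hoffman-type error bounds). Your compactness-plus-continuity argument is elementary and self-contained: the decision space is a product of simplices, the objective and constraint are piecewise affine, and every subsequential limit of $\mathbf{P}_{\rho_n}$-optima is a $\mathbf{P}_0$-optimum, which cleanly establishes upper semicontinuity of the solution map. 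That is a perfectly good trade: less quantitative, more transparent.

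One concrete flaw to flag in your handling of the Hausdorff case: the suggestion that an optimal $\mathbf{P}_0$-solution "can be approximated by convex combinations of $\mathbf{P}_\rho$-optima" cannot work as stated, because $\mathcal{S}_\rho$ is the solution set of a convex program and is therefore itself convex --- convex combinations of $\mathbf{P}_\rho$-optima are again $\mathbf{P}_\rho$-optima, so this construction never leaves $\mathcal{S}_\rho$ and gets you no closer to a point of $\mathcal{S}_0\setminus\mathcal{S}_\rho$. The reverse inclusion $\sup_{P\in\mathcal{S}_0}\mathrm{dist}(P,\mathcal{S}_\rho)\to 0$ is precisely the lower-semicontinuity direction that fails for general parametric programs and genuinely requires the polyhedral Lipschitz stability result the paper cites (or an equivalent, e.g.\ Walkup--Wets/Hoffman bounds); if you want the Hausdorff statement you should lean on that rather than convexity. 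For the one-sided or optimal-value interpretations, your proof stands as written.
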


\begin{proof}
The proof follows from~\cite[Theorem~2.4]{mangasarian1987lipschitz}.
\end{proof}

\begin{proposition} \label{prop:decreasing}
Let $\mathcal{S}_\rho$ denote the set of solutions of $\mathbf{P}_\rho$ in~\eqref{eqn:repair_optim_approximate}.  Then, $\mathbb{E}\{d_{\mathrm{TV}}(\tilde{P}_S,P_S)\}$ at the optimal solution $P_{\tilde{X}|X,S}\in\mathcal{S}_{\rho}$ is decreasing in $\rho$. 
\end{proposition}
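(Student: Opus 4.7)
The plan is to exploit the elementary fact that relaxing an inequality constraint in a minimization problem can only improve (i.e., decrease or leave unchanged) the optimal value. The statement of the proposition is essentially a monotonicity-of-the-value-function result, and the only structural input needed is that the constraint set of $\mathbf{P}_\rho$ is nested in $\rho$.

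First, I would define the feasible set
\begin{equation*}
\mathcal{F}_\rho := \{P_{\tilde{X}|X,S} : d_{\mathrm{TV}}(\tilde{Q}_1,\tilde{Q}_0) \leq \rho,\; P_{\tilde{X}|X,S} \text{ a valid conditional p.m.f.}\},
\end{equation*}
where $\tilde{Q}_s$ is computed from $P_{\tilde{X}|X,S}$ and the marginals $\mathbb{P}\{X=x\mid S=s\}$ as in the paper. Then, for any $\rho_1 \leq \rho_2$, any $P_{\tilde{X}|X,S}\in\mathcal{F}_{\rho_1}$ trivially satisfies $d_{\mathrm{TV}}(\tilde{Q}_1,\tilde{Q}_0) \leq \rho_1 \leq \rho_2$, so $P_{\tilde{X}|X,S}\in\mathcal{F}_{\rho_2}$. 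Hence $\mathcal{F}_{\rho_1}\subseteq \mathcal{F}_{\rho_2}$.

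Second, letting $V(\rho):=\min_{P_{\tilde{X}|X,S}\in\mathcal{F}_\rho}\mathbb{E}\{d_{\mathrm{TV}}(\tilde{P}_S,P_S)\}$, the standard fact that an infimum over a superset cannot exceed an infimum over a subset gives $V(\rho_2)\leq V(\rho_1)$, i.e., $V$ is non-increasing in $\rho$. Note that $\mathcal{F}_\rho$ is non-empty for every $\rho\geq 0$ because the identity repair $P_{\tilde{X}|X,S}(\tilde{x}|x,s)=\mathds{1}_{\tilde{x}=x}$ always lies in $\mathcal{F}_\rho$ when $d_{\mathrm{TV}}(Q_1,Q_0)\leq \rho$, and the barycenter-type repair realizing $\tilde{Q}_0=\tilde{Q}_1$ lies in $\mathcal{F}_\rho$ otherwise, so the minimum is well defined (the feasible set is a non-empty compact polytope and the objective is continuous).

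There is essentially no technical obstacle here; the only subtlety worth flagging is that ``decreasing'' should be read as ``non-increasing,'' since strict monotonicity would require ruling out the degenerate case where the constraint $d_{\mathrm{TV}}(\tilde{Q}_1,\tilde{Q}_0)\leq \rho$ is inactive at the optimum (for instance, once $\rho$ exceeds the value achieved by the identity repair, further increases in $\rho$ leave $V(\rho)=0$ unchanged). This is consistent with the intended reading of the proposition and with the behavior illustrated in Figure~\ref{fig:2}.
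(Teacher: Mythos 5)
Your argument is the same as the paper's: the feasible set of $\mathbf{P}_\rho$ is nested (increasing) in $\rho$, so the optimal value is non-increasing. Your additional remarks on non-emptiness of the feasible set and on reading ``decreasing'' as ``non-increasing'' are sound refinements, but the core reasoning matches the paper's proof exactly.
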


\begin{proof}
The proof follows from that the constraint set shrinks as $\rho$ gets smaller, i.e., $\{P_{\tilde{X}|X,S}\,|\,d_{\mathrm{TV}}(\tilde{Q}_1(\tilde{x}),\tilde{Q}(\tilde{x}))\leq \rho_1\}\subseteq \{P_{\tilde{X}|X,S}\,|\,d_{\mathrm{TV}}(\tilde{Q}_1(\tilde{x}),\tilde{Q}(\tilde{x}))\leq \rho_2\}$ if $\rho_1\leq \rho_2$. 
\end{proof}

Proposition~\ref{prop:decreasing} shows that the solution to~\eqref{eqn:repair_optim_approximate} is lower bounded by~\eqref{eqn:barycenter} for, in fact, any $\rho$. This illustrates the importance of the  total variation barycenter in fairness. 

\section{Numerical Experiments}
This dataset is composed of nearly 49,000 anonymized records from the 1994 Census database~\cite{Dua:2017}. The objective of the forecasting model is to use inputs, such as age, education, and work, to determine whether an individual earns more than \$50,000. This is a classification task. We focus on two inputs: work per week (hours) and education (years). The sensitive or protected attribute is gender: male and female. Cultural factors can contribute to disparities on work hours per week between male and female individuals or their education level. These cultural norms should not be reflected in forecasting models. 

Figure~\ref{fig:3} [left] shows the histogram of work per week and education for male and female individuals for the original data. The effect of gender divide is clear on work hours per week; male individuals tend to work longer hours. We can pre-process the data to achieve fairness. Figure~\ref{fig:3} [middle] illustrates the histogram of work per week and education for male and female individuals for pre-processed data with approximate fairness with $\rho=0.1$. The data for male and female now looks very similar for longer hours while shorter hours are still dominated by female individuals. Figure~\ref{fig:3} [right] illustrates the histogram of work per week and education for male and female individuals for pre-processed data to ensure statistical parity. In this case, the statistics are identical for male and female individuals. 

Now, we can investigate the trade-off between fairness and utility. Figure~\ref{fig:2} illustrates The upper bound on utility degradation versus the upper bound on disparate impact. The utility improves (degradation decreases) as the disparate impact gets larger. 

\section{Conclusions}
We used total variation distance to capture fairness and utility in pre-processing for fair machine learning. We were able to cast the problem of finding the optimal pre-processing regiment for enforcing fairness as minimizing total variations distance between the distribution of the data before and after pre-processing subject to a constraint on the total variation distance between the distribution of the inputs given protected attributes. This problem is a linear program that can be efficiently solved. We demonstrated the results using numerical experimentation on a practice dataset.

\bibliography{citation}

\begin{thebibliography}{10}

\bibitem{jiang2017artificial}
F.~Jiang, Y.~Jiang, H.~Zhi, Y.~Dong, H.~Li, S.~Ma, Y.~Wang, Q.~Dong, H.~Shen,
  and Y.~Wang, ``Artificial intelligence in healthcare: past, present and
  future,'' {\em Stroke and Vascular Neurology}, vol.~2, no.~4, pp.~230--243,
  2017.

\bibitem{kourou2015machine}
K.~Kourou, T.~P. Exarchos, K.~P. Exarchos, M.~V. Karamouzis, and D.~I.
  Fotiadis, ``Machine learning applications in cancer prognosis and
  prediction,'' {\em Computational and Structural Biotechnology Journal},
  vol.~13, pp.~8--17, 2015.

\bibitem{dua2016data}
S.~Dua and X.~Du, {\em Data mining and machine learning in cybersecurity}.
\newblock CRC Press, 2016.

\bibitem{doi1011770049124118782533}
R.~Berk, H.~Heidari, S.~Jabbari, M.~Kearns, and A.~Roth, ``Fairness in criminal
  justice risk assessments: The state of the art,'' {\em Sociological Methods
  \& Research}, p.~0049124118782533, 2018.

\bibitem{amazon_hiring}
R.~Goodman, ``Why {Amazon}’s automated hiring tool discriminated against
  women.''
\newblock
  \url{https://www.aclu.org/blog/womens-rights/womens-rights-workplace/why-amazons-automated-hiring-tool-discriminated-against}.

\bibitem{lending_nab}
J.~Eyers, ``{NAB} turns to {AI} to decide on small business loans.''
\newblock
  \url{https://www-afr-com.eu1.proxy.openathens.net/companies/financial-services/nab-turns-to-artificial-intelligence-to-assess-small-business-loans-20201204-p56kmk}.

\bibitem{propublica}
J.~Angwin, J.~Larson, S.~Mattu, and L.~Kirchner, ``Machine bias.''
\newblock
  \url{https://www.propublica.org/article/machine-bias-risk-assessments-in-criminal-sentencing}.

\bibitem{zafar2017fairness}
M.~B. Zafar, I.~Valera, M.~Gomez~Rodriguez, and K.~P. Gummadi, ``Fairness
  beyond disparate treatment \& disparate impact: Learning classification
  without disparate mistreatment,'' in {\em Proceedings of the 26th
  International Conference on World Wide Web}, pp.~1171--1180, 2017.

\bibitem{bechavod2017penalizing}
Y.~Bechavod and K.~Ligett, ``Penalizing unfairness in binary classification,''
  in {\em Fairness, Accountability and Transparency in Machine Learning
  (FAT/ML)}, 2017.

\bibitem{donini2018empirical}
M.~Donini, L.~Oneto, S.~Ben-David, J.~S. Shawe-Taylor, and M.~Pontil,
  ``Empirical risk minimization under fairness constraints,'' in {\em Advances
  in Neural Information Processing Systems}, pp.~2791--2801, 2018.

\bibitem{feldman2015certifying}
M.~Feldman, S.~A. Friedler, J.~Moeller, C.~Scheidegger, and
  S.~Venkatasubramanian, ``Certifying and removing disparate impact,'' in {\em
  proceedings of the 21th ACM SIGKDD international conference on knowledge
  discovery and data mining}, pp.~259--268, 2015.

\bibitem{johndrow2019algorithm}
J.~E. Johndrow, K.~Lum, {\em et~al.}, ``An algorithm for removing sensitive
  information: application to race-independent recidivism prediction,'' {\em
  The Annals of Applied Statistics}, vol.~13, no.~1, pp.~189--220, 2019.

\bibitem{POTRA2000281}
F.~A. Potra and S.~J. Wright, ``Interior-point methods,'' {\em Journal of
  Computational and Applied Mathematics}, vol.~124, no.~1, pp.~281 -- 302,
  2000.

\bibitem{agueh2011barycenters}
M.~Agueh and G.~Carlier, ``Barycenters in the wasserstein space,'' {\em SIAM
  Journal on Mathematical Analysis}, vol.~43, no.~2, pp.~904--924, 2011.

\bibitem{gordaliza2019obtaining}
P.~Gordaliza, E.~Del~Barrio, G.~Fabrice, and J.-M. Loubes, ``Obtaining fairness
  using optimal transport theory,'' in {\em International Conference on Machine
  Learning}, pp.~2357--2365, 2019.

\bibitem{jiang2020wasserstein}
R.~Jiang, A.~Pacchiano, T.~Stepleton, H.~Jiang, and S.~Chiappa, ``Wasserstein
  fair classification,'' in {\em Uncertainty in Artificial Intelligence},
  pp.~862--872, 2020.

\bibitem{calmon2017optimized}
F.~Calmon, D.~Wei, B.~Vinzamuri, K.~Natesan~Ramamurthy, and K.~R. Varshney,
  ``Optimized pre-processing for discrimination prevention,'' {\em Advances in
  Neural Information Processing Systems}, vol.~30, pp.~3992--4001, 2017.

\bibitem{dwork2012fairness}
C.~Dwork, M.~Hardt, T.~Pitassi, O.~Reingold, and R.~Zemel, ``Fairness through
  awareness,'' in {\em Proceedings of the 3rd innovations in theoretical
  computer science conference}, pp.~214--226, 2012.

\bibitem{Dobrushin1956}
R.~Dobrushin, ``Central limit theorem for non-stationary markov chains. {I},''
  {\em Teoriya Veroyatnostei i ee Primeneniya}, vol.~1, pp.~72--89, 1956.
\newblock Translation: Theory of Probability \& Its Applications vol. 1, no. 1,
  pp. 65--80, 1956.

\bibitem{polyanskiy2015dissipation}
Y.~Polyanskiy and Y.~Wu, ``Dissipation of information in channels with input
  constraints,'' {\em IEEE Transactions on Information Theory}, vol.~62, no.~1,
  pp.~35--55, 2015.

\bibitem{tsybakov2008introduction}
A.~B. Tsybakov, {\em Introduction to Nonparametric Estimation}.
\newblock Springer Series in Statistics, Springer New York, 2008.

\bibitem{mukhopadhyay2000probability}
N.~Mukhopadhyay, {\em Probability and Statistical Inference}.
\newblock Statistics: A Series of Textbooks and Monographs, Taylor \& Francis,
  2000.

\bibitem{mangasarian1987lipschitz}
O.~L. Mangasarian and T.-H. Shiau, ``Lipschitz continuity of solutions of
  linear inequalities, programs and complementarity problems,'' {\em SIAM
  Journal on Control and Optimization}, vol.~25, no.~3, pp.~583--595, 1987.

\bibitem{Dua:2017}
D.~Dheeru and E.~Karra~Taniskidou, ``{UCI} machine learning repository,'' 2017.
\newblock University of California, Irvine,
  \url{http://archive.ics.uci.edu/ml}.

\end{thebibliography}
\bibliographystyle{ieeetr}

\end{document}